\documentclass[american,english,acmsmall,screen]{acmart}
\usepackage[T1]{fontenc}
\usepackage[utf8x]{inputenc}
\setcounter{secnumdepth}{3}
\setcounter{tocdepth}{3}
\usepackage{array}
\usepackage{graphicx}

\makeatletter

\providecommand{\tabularnewline}{\\}

\providecommand*{\code}[1]{\texttt{#1}}


\bibliographystyle{ACM-Reference-Format}
\citestyle{acmauthoryear}   

\setcopyright{rightsretained}
\acmPrice{}
\acmDOI{10.1145/3434339}
\acmYear{2021}
\copyrightyear{2021}
\acmSubmissionID{popl21main-p660-p}
\acmJournal{PACMPL}
\acmVolume{5}
\acmNumber{POPL}
\acmArticle{58}
\acmMonth{1}

\widowpenalty10000
\clubpenalty10000

\begin{CCSXML}
<ccs2012>    <concept>        <concept_id>10002950.10003648</concept_id>        <concept_desc>Mathematics of computing~Probability and statistics</concept_desc>        <concept_significance>500</concept_significance>        </concept>    <concept>        <concept_id>10003752.10003753.10003757</concept_id>        <concept_desc>Theory of computation~Probabilistic computation</concept_desc>        <concept_significance>500</concept_significance>        </concept>  </ccs2012>
\end{CCSXML}
\ccsdesc[500]{Mathematics of computing~Probability and statistics}
\ccsdesc[500]{Theory of computation~Probabilistic computation}
\keywords{probabilistic programming}

\makeatother

\usepackage{babel}
\begin{document}
\author{Jules Jacobs}
\affiliation{
   \institution{Radboud University and Delft University of Technology}
   \country{The Netherlands}
}
\email{julesjacobs@gmail.com}
\title{Paradoxes of Probabilistic Programming}
\subtitle{And How to Condition on Events of Measure Zero with Infinitesimal
Probabilities}
\begin{abstract}
Probabilistic programming languages allow programmers to write down
conditional probability distributions that represent statistical and
machine learning models as programs that use observe statements. These
programs are run by accumulating likelihood at each observe statement,
and using the likelihood to steer random choices and weigh results
with inference algorithms such as importance sampling or MCMC. We
argue that naive likelihood accumulation does not give desirable semantics
and leads to paradoxes when an observe statement is used to condition
on a measure-zero event, particularly when the observe statement is
executed conditionally on random data. We show that the paradoxes
disappear if we explicitly model measure-zero events as a limit of
positive measure events, and that we can execute these type of probabilistic
programs by accumulating \emph{infinitesimal probabilities} rather
than probability densities. Our extension improves probabilistic programming
languages as an executable notation for probability distributions
by making it more well-behaved and more expressive\emph{, }by allowing
the programmer to be explicit about\emph{ }which limit is intended
when conditioning on an event of measure zero.
\end{abstract}
\maketitle

\section{Introduction}

Probabilistic programming languages such as Stan \citep{Carpenter_stan:a},
Church \citep{GoodmanMRBT08}, and Anglican \citep{WoodMM14} allow
programmers to express probabilistic models in statistics and machine
learning in a structured way, and run these models with generic inference
algorithms such as importance sampling, Metropolis-Hastings, SMC,
HMC. At its core, a probabilistic programming language is a notation
for probability distributions that looks much like normal programming
with calls to random number generators, but with an additional \code{observe}
construct.

There are two views on probabilistic programming. The pragmatist says
that probabilistic programs are a convenient way to write down a likelihood
function, and the purist says that probabilistic programs are a notation
for structured \foreignlanguage{american}{probabilistic} models. The
pragmatist interprets an \code{observe} statement as ``soft conditioning'',
or imperatively multiplying the likelihood function by some factor.
The purist interprets an \code{observe} statement as true probabilistic
conditioning in the sense of conditional distributions. The pragmatist
may also want to write a probabilistic program to compute the likelihood
function of a conditional distribution, but the pragmatist is not
surprised that there are non-sensical probabilistic programs that
do not express any sensible statistical model. After all, if one writes
down an arbitrarily likelihood function then it will probably not
correspond to a sensible, structured, non-trivial statistical model.
The pragmatist blames the programmer for writing non-sensical programs,
just as it would have been the fault of the programmer if they had
written down the same likelihood function manually. The purist, on
the other hand, insists that \emph{any} probabilistic program corresponds
to structured statistical model, and that each \code{observe} statement
in a probabilistic program has a probabilistic interpretation whose
composition results in the statistical model. We will show that the
current state is not satisfactory for the purist, and we will show
how to make probabilistic programming languages satisfactory in this
respect.

The difficulties with conditioning in probabilistic programs can be
traced back to a foundational issue in probability theory. When the
event $E$ being conditioned on has nonzero probability, the conditional
distribution $\mathbb{P}(A|E)$ is defined as:
\begin{align*}
\mathbb{P}(A|E) & =\frac{\mathbb{P}(A\cap E)}{\mathbb{P}(E)}
\end{align*}
However, this formula for conditional probability is undefined when
$\mathbb{P}(E)=0$, since then also $\mathbb{P}(A\cap E)=0$ and the
fraction $\mathbb{P}(A|E)=\frac{0}{0}$ is undefined. In probabilistic
programming we often wish to condition on events $E$ with probability
$0$, such as ``$x=3.4$'', where $x$ is a continuous random variable.
There are several methods to condition on measure-zero events. For
continuous distributions that have probability density functions,
we can replace the probabilities in the above formula with probability
densities, which are (usually) nonzero even if $\mathbb{P}(E)$ is
zero. For more complicated situations, we can use the Radon--Nikodym
derivative or disintegration \citep{Chang93conditioningas,ShanR17,DahlqvistK20,ackerman_freer_roy_2017}.

A general method for conditioning on measure-zero events is to define
a sequence of events $E_{\epsilon}$ parameterized by a number $\epsilon>0$
such that $E_{\epsilon}$ in some sense converges to $E$ in the limit
$\epsilon\to0$, but $\mathbb{P}(E_{\epsilon})>0$ for all $\epsilon>0$.
We then \emph{define} the conditional distribution to be the limit
of $\mathbb{P}(A|E_{\epsilon})$:
\begin{align*}
\mathbb{P}(A|E) & =\lim_{\epsilon\to0}\frac{\mathbb{P}(A\cap E_{\epsilon})}{\mathbb{P}(E_{\epsilon})}
\end{align*}

In the book \emph{Probability Theory: The Logic of Science \citep{jaynes03},}
E.T. Jaynes explains that conditioning on measure-zero events is inherently
ambiguous, because it depends not just on $E$ but also on the limiting
operation $E_{\epsilon}$ we choose:
\begin{quote}
Yet although the sequences $\{A_{\epsilon}\}$ and $\{B_{\epsilon}\}$
tend to the same limit ``$y=0$'', the conditional densities {[}$\mathbb{P}(x|A_{\epsilon})$
and $\mathbb{P}(x|B_{\epsilon})${]} tend to different limits. As
we see from this, merely to specify ``$y=0$'' without any qualifications
is ambiguous; it tells us to pass to a measure-zero limit, but does
not tell us which of any number of limits is intended. {[}...{]} Whenever
we have a probability density on one space and we wish to generate
from it one on a subspace of measure zero, the only safe procedure
is to pass to an explicitly defined limit by a process like {[}$A_{\epsilon}$
and $B_{\epsilon}${]}. In general, the final result will and must
depend on which limiting operation was specified. This is extremely
counter-intuitive at first hearing; yet it becomes obvious when the
reason for it is understood.
\end{quote}
The other methods implicitly make the choice $E_{\epsilon}$ for us.
Conditioning on events of measure-zero using those methods can lead
to paradoxes such as the Borel-Komolgorov paradox, even in the simplest
case when probability density functions exist. Paradoxes occur because
seemingly unimportant restatements of the problem, such as using a
different parameterization for the variables, can affect the choice
of $E_{\epsilon}$ that those methods make, and thus change the value
of the limit. Consider the following probabilistic program:
\begin{verbatim}

  h = rand(Normal(1.7, 0.5))
  if rand(Bernoulli(0.5))
     observe(Normal(h, 0.1), 2.0)
  end

\end{verbatim}
We first sample a value (say, a person's height) from a prior normally
distributed around 1.7 meters and then with probability 0.5 we observe
a measurement normally distributed around the height to be 2.0. We
ran this program in Anglican with importance sampling, and obtained
the following expectation values for $h$: 1.812 1.814 1.823 1.813
1.806 (10000 samples each). Suppose that we had measured the height
in centimeters instead of meters:
\begin{verbatim}

  h = rand(Normal(170, 50))
  if rand(Bernoulli(0.5))
     observe(Normal(h, 10), 200)
  end

\end{verbatim}
We might naively expect this program to produce roughly the same output
as the previous program, but multiplied by a factor of 100 to account
for the conversion of meters to centimeters. Instead, we get 170.1
170.4 171.5 170.2 169.4. This \foreignlanguage{american}{behavior}
happens because even though the units of the program appear to be
correct, the calculations that importance sampling does to estimate
the expectation value involve arithmetic with inconsistent units (in
this case, adding a quantity with units $m^{-1}$ to a quantity with
neutral units). The issue is not particular to Anglican or importance
sampling, but due to the interaction of stochastic branching with
way the likelihood is calculated with probability densities; other
algorithms \citep{paige-nips-2014,tolpin-ecml-2015} have the same
\foreignlanguage{american}{behavior}. In fact, formal semantics based
on likelihood accumulation, such as the commutative semantics \citep{10.1007/978-3-662-54434-1_32}
and the semantics based on on Quasi-Borel spaces \citep{HeunenKSY17},
also perform arithmetic with inconsistent units for this example.
Lexical likelihood weighting \citep{pmlr-v80-wu18f} does give the
right answer for this example\footnote{Many thanks to Alex Lew for pointing this out.},
but still exhibits unit anomalies for other examples described in
Section \ref{sec:Three-types-of}.

Unit errors in a programming language's implementation or semantics
may seem like a very serious issue, but we do not believe that this
is a show-stopper in practice, because practitioners can always take
the pragmatist view and avoid writing such programs. Although we consider
this to be an important foundational issue, it does not invalidate
existing work on probabilistic programming.

It is known that conditionals can be problematic. Some inference algorithms,
like SMC, will make assumptions that exclude \code{observe} inside
conditionals. For example, \citep{probprogfrankwood} mentions the
following when describing SMC:
\begin{quote}
Each breakpoint needs to occur at an expression that is evaluated
in every execution of a program. In particular, this means that breakpoints
should not be associated with expressions inside branches of if expressions.
{[}...{]} An alternative design, which is often used in practice,
is to simply break at every observe and assert that each sample has
halted at the same point at run time.
\end{quote}
If the design is used where breakpoints happen at every observe, then
the assertion that breakpoints should not be associated with expressions
inside branches of if expressions will disallow using SMC with programs
that have observes inside conditionals. Languages such as Stan, that
do not have or do not allow stochastic branching, also do not suffer
from the preceding example. In section \ref{sec:Three-types-of} we
will show that the problem is not limited to conditionals; there are
programs that do not have conditionals but nevertheless have paradoxical
\foreignlanguage{american}{behavior}. Furthermore, we show that the
standard method of likelihood accumulation for implementing probabilistic
programming languages can sometimes obtain an answer that disagrees
with the purist's exact value for $\mathbb{P}(A|E)$ \textbf{even
if $\mathbb{P}(E)$ is nonzero}, due to a confusion between probabilities
and probability densities.

We identify three types paradoxes that affect probabilistic programming
languages that allow dynamically conditioning on events of measure-zero.
These paradoxes are based on the idea that it should not matter which
parameter scale we use for variables. It shouldn't matter whether
we use meters or centimeters to measure height, but it also shouldn't
matter whether we use energy density or decibels to measure sound
intensity. The change from centimeters to meters involves a linear
parameter transformation by $cm=0.01m$, whereas the change from energy
density to decibels involves a nonlinear parameter transformation
$\text{decibels}=\log(\text{energy density})$. We give several example
programs that show that the output of a probabilistic program can
depend on the parameter scale used when we condition on events of
measure zero.

Following Jaynes' advice, we extend the language with notation for
explicitly choosing \emph{which} limit $E_{\epsilon}$ we mean in
an \code{observe} statement. We give an implementation of likelihood
accumulation using \emph{infinitesimal probabilities} instead of probability
densities, and show that this does not suffer from the three types
of paradoxes. Infinitesimal probabilities give meaning to conditioning
on measure-zero events in terms of a limit of events of strictly positive
measure. Since events of strictly positive measure are unproblematic,
paradoxes can no longer occur.

Furthermore, we add explicit language support for parameter transformations.
This is only soundly possible due to the introduction of infinitesimal
probabilities. We show that introducing a parameter transformation
in an \code{observe} statement does not change the \foreignlanguage{american}{behavior}
of the probabilistic program. That is, we show that in our language,
\code{observe(D,I)} has the same behavior as \code{observe(D',I')}
where \code{D',I'} is \code{D,I} in a different parameter scale.

\bigskip{}
Our contributions are the following.
\begin{itemize}
\item We identify a problem with existing probabilistic programming languages,
in which likelihood accumulation with probability densities can result
in three different types of paradoxes when conditioning on a measure-zero
event. The three paradoxes violate the principle that the output of
a program should not depend on the parameter scale used (Section \ref{sec:Three-types-of}).
\item We analyze the event that probabilistic programs with observe statements
condition on, taking the paradox-free discrete case as a guide, in
order to determine what \code{observe} ought to mean in the continuous
case (Section \ref{sec:ON-THE-EVENT}).
\item We propose a change to probabilistic programming languages to avoid
the paradoxes of the continuous measure-zero case, by changing the
\code{observe} construct to condition on measure-zero events $E$
as an explicit limit $\epsilon\to0$ of $E_{\epsilon}$ (Sections
\ref{sec:AVOIDING-EVENTS-OF} and \ref{sec:USING-INFINITESIMAL-NUMBERS}),
and
\begin{itemize}
\item a method for computing the limit by accumulating \emph{infinitesimal
probabilities} instead of probability densities, which we use to implement
the adjusted \code{observe} construct,
\item a theorem that shows that infinitesimal probabilities correctly compute
the limit of $E_{\epsilon}$, ensuring that programs that use observe
on measure-zero events are paradox free,
\item a translation from the existing \code{observe} construct to our new
\code{observe} construct, which gives the same output if the original
program was non-paradoxical,
\item language support for parameter transformations, which we use to show
that the meaning of programs in our language is stable under parameter
transformations,
\item an implementation of our language as an embedded DSL in Julia \citep{measurezeroartifact}
(Section \ref{sec:IMPLEMENTATION-IN-JULIA}).
\end{itemize}
\end{itemize}

\section{On the Event that Observe Conditions On\label{sec:ON-THE-EVENT}}

Different probabilistic programming languages have different variants
of the \code{observe} statement. Perhaps it's simplest variant, \code{observe(b)}
takes a boolean \code{b} and conditions on that boolean being true.
For instance, if we throw two dice and want to condition on the sum
of the dice being 8, we can use this probabilistic program, in pseudocode:\\

\begin{verbatim}
  function twoDice()
     x = rand(DiscreteUniform(1,6))
     y = rand(DiscreteUniform(1,6))
     observe(x + y == 8)
     return x
  end

\end{verbatim}
The program \code{twoDice} represents the conditional distribution
$\mathbb{P}(x|x+y=8)$ where $x$ and $y$ are uniformly distributed
numbers from $1$ to $6$. We wrap the program in a function and use
the return value to specify the variable \code{x} whose distribution
we are interested in. Anglican has a \code{defquery} construct analogous
to the function definition that we use here.

Probabilistic programming languages allow us to sample from the distribution
specified by the probabilistic program and compute expectation values.
The simplest method to implement \code{observe} is \emph{rejection
sampling} \citep{vonNeumann1951,GoodmanMRBT08}: we start a \emph{trial}
by running the program from the beginning, drawing random samples
with \code{rand}, and upon encountering \code{observe(x + y == 8)}
we test the condition, and if the condition is not satisfied we reject
the current trial and restart the program from the beginning hoping
for better luck next time. If all observes in a trial are satisfied,
then we reach the return statement and obtain a sample for \code{x}.
We estimate expectation values by averaging multiple samples.

What makes probabilistic programs such an expressive notation for
probability distributions is that we have access to use the full power
of a programming language, such as its control flow and higher order
functions \citep{HeunenKSY17}. The following example generates two
random dice throws \code{x} and \code{y}, and a random boolean \code{b},
and uses an observe statement to condition on the sum of the dice
throws being \code{8} if \code{b = true}, with control flow:
\begin{verbatim}

  x = rand(DiscreteUniform(1,6))
  y = rand(DiscreteUniform(1,6))
  b = rand(Bernoulli(0.5))
  if b
     observe(x + y == 8)
  end
  return x

\end{verbatim}
This code expresses the conditional probability distribution $\mathbb{P}(x|E)$
where $x,y,b$ are distributed according to the given distributions,
and $E$ is the event $(b=true\wedge x+y=8)\vee(b=false).$ That is,
a trial is successful if $x+y=8$ or if $b=false$.

In general, a probabilistic program conditions on the event that the
tests of all \code{observe} statements that are executed succeed.
A bit more formally, we have an underlying probability space $\Omega$
and we think of an element $\omega\in\Omega$ as the ``random seed''
that determines the outcome of all \code{rand} calls (it is sufficient
to take $\Omega=\mathbb{R}$; a real number contains an infinite amount
of information, sufficient to determine the outcome of an arbitrary
number of \code{rand} calls, even if those calls are sampling from
continuous distributions). The execution trace of the program is completely
determined by the choice $\omega\in\Omega$. For some subset $E\subset\Omega$,
the tests of all the \code{observe} calls that are executed in the
trace succeed. This is the event $E$ that a probabilistic program
conditions on. Rejection sampling gives an intuitive semantics for
the \code{observe} statement:

\begin{table}[H]
\begin{tabular}{|>{\centering}p{0.55\paperwidth}|}
\hline
For a boolean \code{b}, the statement \code{observe(b)} means that
we only continue with the current trial only if \code{b = true}.
If \code{b = false} we reject the current trial.\tabularnewline
\hline
\end{tabular}
\end{table}

Unfortunately, rejection sampling can be highly inefficient when used
to run a probabilistic program. If we use 1000-sided dice instead
of 6-sided dice, the probability that the sum $x+y$ is a particular
fixed value is very small, so most trials will be rejected and it
may take a long time to obtain a successful sample. Probabilistic
programming languages therefore have a construct \code{observe(D,x)}
that means \code{observe(rand(D) == x)}, but can be handled by more
efficient methods such as importance sampling or Markov Chain Monte
Carlo (MCMC). The previous example can be written using this type
of \code{observe} as follows:
\begin{verbatim}

  x = rand(DiscreteUniform(1,6))
  b = rand(Bernoulli(0.5))
  if b
     observe(DiscreteUniform(1,6), 8 - x)
  end
  return x

\end{verbatim}
This relies on the fact that \code{x + y == 8} is equivalent to \code{y == 8 - x}.
The intuitive semantics of \code{observe(D,x)} is as follows:

\begin{table}[H]
\begin{tabular}{|>{\centering}p{0.55\paperwidth}|}
\hline
\centering{}For discrete distributions \code{D}, the statement \code{observe(D,x)}
means that we sample from \code{D} and only continue with the current
trial if the sampled value is equal to \code{x}.\tabularnewline
\hline
\end{tabular}
\end{table}

This variant of \code{observe} can be implemented more efficiently
than rejection sampling. We keep track of the \code{weight} of the
current trial that represents the probability that the trial is still
active (i.e. the probability that it was not yet rejected). An \code{observe(D,x)}
statement will multiply the \code{weight} of the current trial by
the probability \code{P(D,x)} that a sample from \code{D} is equal
to \code{x}:

\begin{table}[H]
\begin{tabular}{|>{\centering}p{0.55\paperwidth}|}
\hline
\centering{}For discrete distributions \code{D}, the statement \code{observe(D,x)}
gets executed as \code{weight {*}= P(D,x)}, where \code{P(D,x)}
is the probability of \code{x} in \code{D}.\tabularnewline
\hline
\end{tabular}
\end{table}

The output of a trial of a probabilistic program is now weighted sample:
a pair of random value \code{x} and a \code{weight}. Weighted samples
can be used to compute expectation values as weighted averages (this
is called \emph{importance sampling} \footnote{More advanced MCMC methods can use the weight to make intelligent
choices for what to return from \code{rand} calls, whereas importance
sampling uses a random number generator for \code{rand} calls. We
focus on importance sampling because this is the simplest method beyond
rejection sampling.}). Estimating an expectation value using importance sampling will
usually converge faster than rejection sampling, because importance
sampling's \code{observe} will deterministically weigh the trial
by the probability \code{P(D,x)} rather than randomly rejecting the
trial with probability \code{1 - P(D,x)}. If \code{P(D,x) = 0.01}
then rejection sampling would reject 99\% of trials, which is obviously
very inefficient. It is important to note that multiplying \code{weight {*}= P(D,x)}
is the optimized \emph{implementation} of \code{observe}, and we
may still semantically think of it as rejecting the trial if \code{sample(D) != x}.

If the distribution \code{D} is a continuous distribution, then the
probability that a sample from \code{D} is equal to any particular
value \code{x} becomes zero, so rejection sampling will reject $100\%$
of trials; it becomes infinitely inefficient. This is not surprising,
because on the probability theory side, the event $E$ that we are
now conditioning on has measure zero. Importance sampling, on the
other hand, continues to work in some cases, provided we replace probabilities
with probability densities:

\begin{table}[H]
\begin{tabular}{|>{\centering}p{0.55\paperwidth}|}
\hline
\centering{}For continuous distributions \code{D}, the statement
\code{observe(D,x)} gets executed as \code{weight {*}= pdf(D,x)},
where \code{pdf(D,x)} is the probability density of \code{x} in
\code{D}.\tabularnewline
\hline
\end{tabular}
\end{table}

For instance, if we want to compute $\mathbb{E}[x|x+y=8]$ where $x$
and $y$ are distributed according to $Normal(2,3)$ distributions,
conditioned on their sum being $8$, we can use the following probabilistic
program:
\begin{verbatim}

  x = rand(Normal(2,3))
  observe(Normal(2,3), 8 - x)
  return x

\end{verbatim}
This allows us to draw (weighted) samples from the distribution $\mathbb{P}(x|x+y=8)$
where $x,y$ are distributed according to $Normal(2,3)$. Unfortunately,
as we shall see in the next section, unlike the discrete case, we
do not in general have a probabilistic interpretation for \code{observe(D,x)}
on continuous distributions \code{D} when control flow is involved,
and we can get paradoxical \foreignlanguage{american}{behavior} even
if control flow is not involved.

\section{Three Types of Paradoxes \label{sec:Three-types-of}}

We identify three types of paradoxes. The first two involve control
flow where we either execute observe on different variables in different
control flow paths, or an altogether different number of observes
in different control flow paths. The third paradox is a variant of
the Borel-Komolgorov paradox and involves non-linear parameter transformations.

\subsection{Paradox of Type 1: Different Variables Observed in Different Control
Flow Paths}

Consider the following probabilistic program:
\begin{verbatim}

  h = rand(Normal(1.7, 0.5))
  w = rand(Normal(70, 30))
  if rand(Bernoulli(0.5))
     observe(Normal(h, 0.1), 2.0)
  else
     observe(Normal(w, 5), 90)
  end
  bmi = w / h^2

\end{verbatim}
We sample a person's height $h$ and weight $w$ from a prior, and
then we observe a measurement of the height or weight depending on
the outcome of a coin flip. Finally, we calculate the BMI, and want
to compute its average. If $h'$ is the measurement sampled from $\mathsf{Normal}(h,0.1)$
and $w'$ is the measurement sampled from $\mathsf{Normal}(w,5)$
and $b$ is the boolean sampled from $\mathsf{Bernoulli}(0.5)$, then
the event that this program conditions on is $(b=\mathsf{true}\wedge h'=2.0)\vee(b=\mathsf{false}\wedge w'=90)$.
This event has measure zero.

Just like the program in the introduction, this program exhibits surprising
behavior when we change $h$ from meters to centimeters: even after
adjusting the formula $\mathit{bmi}=w/(0.01\cdot h)^{2}$ to account
for the change of units, the estimated expectation value for $\mathit{bmi}$
still changes. Why does this happen?

The call to \code{observe(D,x)} is implemented as multiplying the
\code{weight} by the probability density of \code{x} in \code{D}.
Importance sampling runs the program many times, and calculates the
estimate for \code{bmi} as a weighted average. Thus the program above
effectively gets translated as follows by the implementation:
\begin{verbatim}

  weight = 1
  h = rand(Normal(1.7, 0.5))
  w = rand(Normal(70, 30))
  if rand(Bernoulli(0.5))
     weight *= pdf(Normal(h, 0.1), 2.0)
  else
     weight *= pdf(Normal(w, 90), 5)
  end
  bmi = w / h^2

\end{verbatim}
Where $\mathsf{pdf}(\mathsf{Normal}(\mu,\sigma),x)$ is the probability
density function of the normal distribution:
\begin{align*}
\mathsf{pdf}(\mathsf{Normal}(\mu,\sigma),x) & =\frac{1}{\sigma\sqrt{2\pi}}e^{-\frac{1}{2}(\frac{x-\mu}{\sigma})^{2}}
\end{align*}
 Importance sampling runs this program $N$ times, obtaining a sequence
$(\mathit{bmi}_{k},\mathit{weight}_{k})_{k\in\{1,\dots,N\}}$. \\
It estimates $\mathbb{E}[\mathit{bmi}]$ with a weighted average:
\begin{align*}
\mathbb{E}[\mathit{bmi}] & \approx\frac{\sum_{k=1}^{N}(\mathit{weight}_{k})\cdot(\mathit{bmi}_{k})}{\sum_{k=1}^{N}(\mathit{weight}_{k})}
\end{align*}

The problem that causes this estimate to change if we change the units
of \code{h} is that the formula adds quantities with inconsistent
units: some $\mathit{weight}_{k}$ have unit $m^{-1}$ (inverse length)
and some have unit $kg^{-1}$\emph{ }(inverse mass)\emph{.}

It might be surprising that the weights have units at all, but consider
that if we have a probability distribution $D$ over values of unit
$U$, then the probability density function $\mathsf{pdf}(D,x)$ has
units $U^{-1}$. The formula for $\mathsf{pdf}(\mathsf{Normal}(\mu,\sigma),x)$
shows this in the factor of $\frac{1}{\sigma}$ in front of the (unitless)
exponential, which has a unit because $\sigma$ has a unit.

The call \code{pdf(Normal(h, 0.1), 2.0)} has units $m^{-1}$ and
the call \code{pdf(Normal(w, 90), 5)} has units $kg^{-1}$, and thus
the variable \code{weight} has units $m^{-1}$ or $kg^{-1}$ depending
on the outcome of the coin flip. The weighted average estimate for
$\mathbb{E}[\mathit{bmi}]$ adds weights of different runs together,
which means that it adds values of unit $m^{-1}$ to values of unit
$kg^{-1}$ . This manifests itself in the estimate changing depending
on whether we use $m$ or $cm$: computations that do arithmetic with
inconsistent units may give different results depending on the units
used. This calls into question whether this estimate is meaningful,
since the estimate depends on whether we measure a value in $m$ or
$cm$, or in $kg$ or $g$, which arguably should not matter at all.

The reader might now object that conditionally executed \code{observe}
statements are always wrong, and probabilistic programs that use them
should be rejected as erroneous. However, in the discrete case there
are no unit errors, because in that case the weight gets multiplied
by a \emph{probability} rather than a \emph{probability density},
and probabilities are unitless. Furthermore, in the preceding section
we have seen that conditionally executed observe statements have a
rejection sampling interpretation in the discrete case. This gives
the programs a probabilistic meaning in terms of conditional distributions,
even if the discrete observe statements are inside conditionals. The
event $E$ that is being conditioned on involves the boolean conditions
of the control flow. Ideally we would therefore not want to blame
the programmer for using conditionals, but change the implementation
of \code{observe} on continuous variables so that the program is
meaningful in the same way that the analogous program on discrete
variables is meaningful.

\subsection{Paradox of Type 2: Different Number of Observes in Different Control
Flow Paths}

Let us analyze the program from the introduction:
\begin{verbatim}

  h = rand(Normal(1.7, 0.5))
  if rand(Bernoulli(0.5))
     observe(Normal(h, 0.1), 2.0)
  end
  return h

\end{verbatim}
This program exhibits unit anomalies for the same reason: some of
the $\mathit{weight}_{k}$ have units $m^{-1}$ and some have no units,
and adding those leads to the surprising behavior. Rather than taking
this behavior as a given, let us analyze what this program \emph{ought}
to do, if we reason by analogy to the discrete case.

This program has the same structure as the dice program from section
2, the difference being that we now use a normal distribution instead
of a discrete uniform distribution. By analogy to that discrete case,
the event that is being conditioned on is $(b=\mathsf{true}\wedge h'=2.0)\vee(b=\mathsf{false})$,
where $h'$ is the measurement from $\mathsf{Normal}(h,0.1)$.

Surprisingly, this event \emph{does not have measure zero!} The event
$(b=\mathsf{true}\vee h'=2.0)$ has measure zero, but the event $b=\mathsf{false}$
has measure $\frac{1}{2}$, so the entire event has measure $\frac{1}{2}$.
We can therefore unambiguously apply the definition of conditional
probability $\mathbb{P}(A|E)=\frac{\mathbb{P}(A\cap E)}{\mathbb{P}(E)}$.
Our probability space is $\Omega=\mathbb{R}\times\mathbb{R}\times\mathsf{bool}$,
corresponding to $h\sim\mathsf{Normal}(1.7,0.5)$, $h'\sim\mathsf{Normal}(h,0.1)$,
$b\sim\mathsf{Bernoulli}(0.5)$, and $A\subseteq\Omega$ and $E=\{(h,h',b)|(b=\mathsf{true}\wedge h'=2.0)\vee(b=\mathsf{false})\}\subseteq X$.
The posterior $\mathbb{P}(A|E)=\frac{P(A\cap E)}{P(E)}=2\cdot\mathbb{P}(A\cap E)=2\cdot\mathbb{P}(A\cap\{(h,h',b)|b=\mathsf{false}\})$,
so the marginal posterior for $h$ is simply $\mathsf{Normal}(1.7,0.5)$.
That is, the whole if statement with the \code{observe} ought to
have no effect.

We can understand this intuitively in terms of rejection sampling:
if the sampled boolean $b=\mathsf{true}$, then the observe statement
will reject the current trial with probability 1, because the probability
of sampling exactly 2.0 from a normal distribution is zero. Hence
if $b=\mathsf{true}$ then the trial will almost surely get rejected,
whereas if $b=\mathsf{false}$ the trial will not get rejected. The
trials where $b=\mathsf{true}\wedge h'=2.0$ are negligibly rare,
so even though the expectation of $h$ is affected \emph{in those
trials}, they do not contribute to the final expectation value; only
trials with $b=\mathsf{false}$ do.

As an aside: if we added an extra \emph{unconditional} \code{observe(Normal(h, 0.1), 1.9)}\emph{
}to the program, then the whole event will have measure zero, but
nevertheless, trials with $b=\mathsf{false}$ will dominate over trials
with $b=\mathsf{true}$, relatively speaking. In general, the control
flow path with the least number of continuous observes dominates.
If there are multiple control flow paths with minimal number of observes,
but also control flow paths with a larger number of observes, we may
have a paradox of mixed type 1 \& 2.

This reasoning would imply that the if statement and the observe statement
are irrelevant; the program ought to be equivalent to \code{return rand(Normal(1.7, 0.5))}.
If this still seems strange, consider the following discrete analogue:
\begin{verbatim}

  h = rand(Binomial(10000, 0.5))
  if rand(Bernoulli(0.5))
     observe(binomial(10000, 0.9), h)
  end
  return h

\end{verbatim}
That is, we first sample $h$ between 0 and 10000 according to a binomial
distribution, and then with probability $0.5$ we observe that $h$
is equal to a number sampled from another binomial distribution that
gives a number between $0$ and $10000$. Since that binomial distribution
is highly biased toward numbers close to $10000$, we might expect
the average value of $h$ to lie significantly higher than $5000$.
This is not the case. The rejection sampling interpretation tells
us that most of the trials where the coin flipped $\mathsf{true}$,
will be rejected, because the sample from $\mathsf{Binomial}(10000,0.9)$
is almost never equal to $h$. Thus, although \emph{those} samples
have an average significantly above $5000$, almost all of the successful
trials will be trials where the coin flipped $\mathsf{false}$, and
thus the expected value of $h$ will lie very close to $5000$.

Since we know that rejection sampling agrees with importance sampling
in expectation, importance sampling will also compute an estimate
for the expectation value of $h$ that lies very close to $5000$.
The further we increase the number 10000, the stronger this effect
becomes, because the probability that the second sample is equal to
$h$ further decreases. In the continuous case this probability becomes
$0$, so the successful samples will almost surely be from trials
where the coin flipped to $\mathsf{false}$. Therefore the average
value of $h$ in the continuous case should indeed be $170$, unaffected
by the if statement and the observe.

Another way to express this point, is that in the discrete case importance
sampling, rejection sampling, and the exact value given by the conditional
expectation are all in agreement, even if conditionals are involved.
On the other hand, in the continuous case, importance sampling with
probability densities gives a different answer than rejection sampling
and the exact value given by the conditional expectation $\mathbb{E}[h|E]$
(the latter two \emph{are} equal to each other; both $1.7$).

The reader may insist that the semantics of the program is \emph{defined}
to be weight accumulation with probability densities, that is, the
semantics of the program is \emph{defined} to correspond to
\begin{verbatim}

  weight = 1
  h = rand(Normal(1.7, 0.5))
  if rand(Bernoulli(0.5))
     weight *= pdf(Normal(h, 0.1), 2.0)
  end
  return h

\end{verbatim}
We can only appeal to external principles to argue against this, such
as unit consistency, analogy with the discrete case, the probabilistic
interpretation of observe, and the rejection sampling interpretation
of observe, but the reader may choose to lay those principles aside
and take this \emph{implementation} of observe as the \emph{semantics}
of observe. We do hope to eventually convince this reader that a \emph{different}
implementation of observe that does abide by these principles, could
be interesting. Although our semantics will differ from the standard
one, it will agree with lexicographic likelihood weighting\citep{pmlr-v80-wu18f}
for this example, which does not exhibit this particular paradox.

\subsection{Paradox of Type 3: Non-Linear Parameter Transformations}

Consider the problem of conditioning on $x=y$ given $x\sim\mathsf{Normal}(10,5)$
and $y\sim\mathsf{Normal}(15,5)$, and computing the expectation $\mathbb{E}[\exp(x)]$.
Written as a probabilistic program,
\begin{verbatim}

  x = rand(Normal(10,5))
  observe(Normal(15,5),x)
  return exp(x)

\end{verbatim}
In a physical situation, $x,y$ might be values measured in decibels
and $\exp(x),\exp(y)$ may be (relative) energy density. We could
change parameters to $a=\exp(x)$ and $b=\exp(y)$. Then $a\sim\mathsf{LogNormal}(10,5)$
and $b\sim\mathsf{LogNormal}(15,5)$. Since the event $x=y$ is the
same as $\exp(x)=\exp(y)$, we might naively expect the program to
be equivalent to:
\begin{verbatim}

  a = rand(LogNormal(10,5))
  observe(LogNormal(15,5),a)
  return a

\end{verbatim}
This is not the case. The two programs give different expectation
values. Compared to type 1 \& 2 paradoxes, this type 3 paradox shows
that the subtlety is not restricted to programs that have control
flow or to distributions that are not continuous; the normal and lognormal
distributions are perfectly smooth.

This paradox is closely related to the Borel-Komolgorov paradox. Another
variant of the original Borel-Komolgorov paradox is directly expressible
in Hakaru \citep{ShanR17}, but not in Anglican or Stan. Hakaru allows
the programmer to condition a measure-zero condition $f(x,y)=0$ such
as $x+y-8=0$ directly without having to manually invert the relationship
to $y=8-x$, and performs symbolic manipulation to do exact Bayesian
inference. Hakaru allows a single such observe at the very end of
a program, which allows it to sidestep the previous paradoxes related
to control flow. The semantics of the single observe is defined by
disintegration, which means that the semantics of a Hakaru program
depends on the form of $f$. That is, if we take another function
$g$ with the same solution set $g(x,y)=0$ as $f$, the output may
change. The programmer can use this mechanism to control which event
they want to condition on. Our version of the paradox shows that the
subtlety of conditioning on measure-zero events is not restricted
to programs that use that type of disintegration.

\section{Avoiding Events of Measure Zero with Intervals\label{sec:AVOIDING-EVENTS-OF}}

Unit anomalies cannot occur with discrete distributions, because in
the discrete case we only deal with probabilities and not with probability
densities. Recall that for discrete probability distributions \code{D},
an observe statement \code{observe(D,x)} gets executed as \code{weight {*}= P(D,x)}
where \code{P(D,x)} is the probability of $x$ in the distribution
$D$. Probabilities have no units, so the \code{weight} variable
stays unitless and the weighted average is always unit correct if
the probabilistic program is unit correct, even if \code{observe}
statements get executed conditionally. Furthermore, in the discrete
case we have a probabilistic and rejection sampling interpretation
of observe, and we may view weight accumulation as an optimization
to compute the same expectation values as rejection sampling, but
more efficiently. We wish to extend these good properties to the continuous
case.

The reason that the discrete case causes no trouble is not due to
\code{D} being discrete per se. The reason it causes no trouble is
that \code{P(D,x)} is a probability rather than a probability density.
In the continuous case the probability that \code{rand(D) == x} is
zero, and that's why it was necessary to use probability densities.
However, even in the continuous case, the probability that a sample
from \code{D} lies in some \emph{interval} is generally nonzero.
We shall therefore change the observe statement to \code{observe(D,I)}
where \code{I} is an interval, which conditions on the event $\mathsf{rand}(D)\in I$.
In the discrete case we can allow \code{I} to be a singleton set,
but in the continuous case we insist that \code{I} is an interval
of nonzero width.

We have the following rejection sampling interpretation for \code{observe(D,I)}:

\begin{table}[H]
\begin{tabular}{|>{\centering}p{0.55\paperwidth}|}
\hline
\centering{}For continuous or discrete distributions \code{D}, the
statement \code{observe(D,I)} means that we sample from \code{D}
and only continue with the current trial if the sampled value lies
in \code{I}.\tabularnewline
\hline
\end{tabular}
\end{table}

And the following operational semantics for \code{observe(D,I)}:

\begin{table}[H]
\begin{tabular}{|>{\centering}p{0.55\paperwidth}|}
\hline
\centering{}For continuous or discrete distributions \code{D}, the
statement \code{observe(D,I)} gets executed as \code{weight {*}= P(D,I)}
where \code{P(D,I)} is the probability that a value sampled from
\code{D} lies in \code{I}.\tabularnewline
\hline
\end{tabular}
\end{table}

Let $I=[a,b]=\{x\in\mathbb{R}\,:\,a\leq x\leq b\}$. We can calculate
$\mathbb{P}(\mathsf{rand}(D)\in[a,b])=\mathsf{cdf}(D,b)-\mathsf{cdf}(D,a)$
using the cumulative density function $\mathsf{cdf}(D,x)$. This probability
allows us to update the \code{weight} of the trial. For instance,
a call \code{observe(Normal(2.0,0.1), {[}a,b{]})} can be executed
as \code{weight {*}= normalcdf(2.0,0.1,b) - normalcdf(2.0,0.1,a)}
where $normalcdf(\mu,\sigma,x)$ is the cumulative density function
for the normal distribution.

Notice how this change from probability densities to probabilities
prevents unit anomalies: if we change the variables $a,b$ from meters
to centimeters, then we must write \code{observe(Normal(200,10), {[}a,b{]})},
which gets executed as \code{weight {*}= normalcdf(200,10,b) - normalcdf(200,10,a)}.
We introduced a factor $100$ to convert $\mu$ and $\sigma$ from
meters to centimeters. This conversion ensures that the result of
the program remains unchanged, because $\mathsf{normalcdf}(r\mu,r\sigma,rx)=\mathsf{normalcdf}(\mu,\sigma,x)$
for all $r>0$. Hence the computed \code{weight} will be exactly
the same whether we work with meters or centimeters. On the other
hand, for the probability density function it is \textbf{not} the
case that $\mathsf{normalpdf}(r\mu,r\sigma,rx)=\mathsf{normalpdf}(\mu,\sigma,x)$.
It is precisely this lack of invariance that causes unit anomalies
with probability densities.

\subsection{Conditioning on Measure Zero Events as a Limit of Positive Measure
Events}

We can approximate the old \code{observe(D,x)} behavior with \code{observe(D,I)}
by choosing $I=[x-\frac{1}{2}w,x+\frac{1}{2}w]$ to be a very small
interval of width \code{w} around \code{x} (taking \code{w} to
be a small number, such as \code{w = 0.0001}). This has two important
advantages over \code{observe(D,x)}:
\begin{enumerate}
\item We no longer get unit anomalies or other paradoxes; if we change the
units of \code{x}, we must also change the units of \code{w}, which
keeps the \code{weight} the same.
\item Unlike for \code{observe(D,x)}, we have an unambiguous probabilistic
and rejection sampling interpretation of \code{observe(D,I)} for
intervals of nonzero width, because the event being conditioned on
has nonzero measure.
\end{enumerate}
However, the number \code{w = 0.0001} is rather arbitrary. We would
like to let $w\to0$ and recover the functionality of \code{observe(D,x)}
to condition on an exact value. With sufficiently small \code{w}
we can get arbitrarily close, but we can never recover its behavior
exactly.

We therefore parameterize probabilistic programs by a dimensionless
parameter \code{eps}. The BMI example then becomes:
\begin{verbatim}

  function bmi_example(eps)
     h = rand(Normal(170, 50))
     w = rand(Normal(70, 30))
     if rand(Bernoulli(0.5))
       observe(Normal(200, 10), (h, A*eps))
     else
        observe(Normal(90, 5), (w, B*eps))
     end
     return w / h^2
  end

\end{verbatim}
Since \code{eps} is dimensionless, we can not simply use \code{eps}
as the width of the intervals: because \code{h} is in $cm$, the
width of the interval around \code{h} has to be in $cm$, and the
width of the interval around \code{w} has to be in $kg$. We are
forced to introduce a constant \code{A} with units $cm$ and a constant
\code{B} with units $kg$ that multiply \code{eps} in the widths
of the intervals in the observes.

We could now run importance sampling on \code{bmi\_example(eps)}
for \code{n=10000} trials for \code{eps=0.1}, \code{eps=0.01},
\code{eps=0.001} and so on, to see what value it converges to. If
we run each of these independently, then the \code{rand} calls will
give different results, so there will be different randomness in each
of these, and it may be difficult to see the convergence. In order
to address this, we can run the program with different values of \code{eps}
but with the same random seed for the random number generator. This
will make the outcomes of the rand calls the same regardless of the
value of \code{eps}. In fact, for a given random seed, the result
of running importance sampling for a given number of trials will be
a deterministic function \code{f(seed,eps)} of the random \code{seed}
and \code{eps}

If we assume that the program uses $\mathsf{eps}=\epsilon$ only in
the widths of the intervals, and not in the rest of the program, then
for a fixed \code{seed}, the function $f(\mathsf{seed},\epsilon)$
will be a function of $\epsilon$ of a specific form, because importance
sampling compute
\begin{align*}
f(\mathsf{seed},\epsilon) & =\frac{\sum_{k=1}^{N}(\mathit{weight}_{k}(\epsilon))\cdot(value_{k})}{\sum_{k=1}^{N}(\mathit{weight}_{k}(\epsilon))}
\end{align*}
In this fraction, the $weight_{k}$ are a function of $\epsilon$,
but the $value_{k}$ are independent of $\epsilon$ if $\epsilon$
only occurs inside the widths of intervals. Since the weight gets
multiplied by $P(D,I)$ on each \code{observe(D,I)}, the $weight_{k}(\epsilon)$
is of a very specific form:
\begin{align*}
weight_{k} & (\epsilon)=C\cdot P(D_{1},(x_{1},w_{1}\epsilon))\cdots P(D_{n},(x_{n},w_{n}\epsilon))
\end{align*}
where the constant $C$ contains all the probabilities accumulated
from observes that did not involve $\epsilon$, multiplied by a product
of probabilities that did involve $\epsilon$. Since $P(D,(x,w\epsilon))=\mathsf{cdf}(D,x+\frac{1}{2}w\epsilon)+\mathsf{cdf}(D,x-\frac{1}{2}w\epsilon)$,
we could, in principle determine the precise function $weight_{k}(\epsilon)$
and hence $f(\mathsf{seed},\epsilon)$ for any given seed. We could
then, in principle, compute the exact limit of this function as $\epsilon\to0$,
with a computer algebra system. This is, of course, impractical. The
next section shows that we can compute the limit efficiently by doing
arithmetic with infinitesimal numbers.

\section{Using Infinitesimal Numbers to Handle Measure-Zero Observations\label{sec:USING-INFINITESIMAL-NUMBERS}}

In order to recover the behavior of the old \code{observe(D,x)} using
\code{observe(D,I)} with an interval $I=[x-\frac{1}{2}w,x+\frac{1}{2}w]$,
we want to take the limit $w\to0$, to make $[x-\frac{1}{2}w,x+\frac{1}{2}w]$
an infinitesimally small interval around $x$. We accomplish this
using symbolic infinitesimal numbers\footnote{In the philosophy literature there has been work on using non-standard
analysis and other number systems to handle probability 0 events,
see \citep{pedersen}and \citep{however} and references therein. } of the form $r\epsilon^{n}$, where $r\in\mathbb{R}$ and $n\in\mathbb{Z}$.
We allow $n<0$, so that $r\epsilon^{n}$ can also represent ``infinitely
large'' numbers as well as ``infinitesimally small'' numbers. We
will not make use of this possibility, but it makes the definitions
and proofs more general and more uniform.\footnote{These infinitesimal numbers may be viewed as the leading terms of
Laurent series. This bears some resemblance to the dual numbers used
in automatic differentiation, which represent the constant and linear
term of the Taylor series. In our case we only have the first nonzero
term of the Laurent series, but the order of the term is allowed to
vary.}
\begin{definition}
An infinitesimal number is a pair $(r,n)\in\mathbb{R}\times\mathbb{Z}$,
which we write as $r\epsilon^{n}$.\footnote{The exponent $n$ of $\epsilon$ will play the same role as the number
of densities $d$ in lexicographic likelihood weighting\citep{pmlr-v80-wu18f}.}
\end{definition}
\noindent The infinitesimals of the form $r\epsilon^{0}$ correspond
to the real numbers.
\begin{definition}
Addition, subtraction, multiplication, and division on those infinitesimal
numbers are defined as follows:\label{def:infarith}
\begin{align*}
r\epsilon^{n}\pm s\epsilon^{k} & =\begin{cases}
(r\pm s)\epsilon^{n} & \text{if }n=k\\
r\epsilon^{n} & \text{if }n<k\\
\pm s\epsilon^{k} & \text{if }n>k
\end{cases}\\
(r\epsilon^{n})\cdot(s\epsilon^{k}) & =(r\cdot s)\epsilon^{n+k}\\
(r\epsilon^{n})/(s\epsilon^{k}) & =\begin{cases}
(r/s)\epsilon^{n-k} & \text{if \ensuremath{s\neq0}}\\
\text{undefined} & \text{if \ensuremath{s=0}}
\end{cases}
\end{align*}
Like ordinary division, division of infinitesimals is a partial function,
which is \code{undefined} if the denominator is \emph{exactly} zero.
\end{definition}
These rules may be intuitively understood by thinking of $\epsilon$
as a very small number; e.g. if $n<k$ then $\epsilon^{k}$ will be
negligible compared to $\epsilon^{n}$, which is why we define $r\epsilon^{n}+s\epsilon^{k}=r\epsilon^{n}$
in that case, and keep only the lowest order term.

We represent intervals $[x-\frac{1}{2}w,x+\frac{1}{2}w]$ as midpoint-width
pairs $(x,w)$, where $w$ may be an infinitesimal number.
\begin{definition}
\label{Pdef}If $D$ is a continuous distribution, we compute the
probability $P(D,(x,w))$ that $X\sim D$ lies in the interval $(x,w)$
as:
\begin{align}
P(D,(x,w)) & =\begin{cases}
\mathsf{cdf}(D,x+\frac{1}{2}r)-\mathsf{cdf}(D,x-\frac{1}{2}r) & \text{if }w=r\epsilon^{0}\text{ is not infinitesimal}\\
\mathsf{pdf}(D,x)\cdot r\epsilon^{n} & \text{if }w=r\epsilon^{n}\text{ is infinitesimal }(n>0)
\end{cases}\label{eq:P}
\end{align}
Where $\mathsf{cdf}(D,x)$ and $\mathsf{pdf}(D,x)$ are the cumulative
and probability density functions, respectively.
\end{definition}
Note that the two cases agree in the sense that if $w$ is very small,
then
\begin{align*}
\mathsf{cdf}(D,x+\frac{1}{2}w)-\mathsf{cdf}(D,x-\frac{1}{2}w) & \approx\frac{d}{dx}\mathsf{cdf}(D,x)\cdot w=\mathsf{pdf}(D,x)\cdot w
\end{align*}

\begin{definition}
We say that $f(x)$ is a ``probability expression'' in the variable
$x$ if $f(x)$ is defined using the operations $+,-,\cdot,/$, constants,
and $P(D,(s,rx))$ where $r,s\in\mathbb{R}$ are constants, and $D$
is a probability distribution with differentiable cdf.\label{def:probexpr}

We can view $f$ as a function from reals to reals (on the domain
on which it is defined, that is, excluding points where division by
zero happens), or as a function from infinitesimals to infinitesimals
by re-interpreting the operations in infinitesimal arithmetic. The
value of $f(\epsilon)$ on the symbolic infinitesimal $\epsilon$
tells us something about the limiting behavior of $f(x)$ near zero:
\end{definition}
\begin{theorem}
If $f(x)$ is a probability expression, and if evaluation of $f(\epsilon)$
is not \code{undefined}, and $f(\epsilon)=r\epsilon^{n}$, then $\lim_{x\to0}\frac{f(x)}{x^{n}}=r$.\label{thm:lim}
\end{theorem}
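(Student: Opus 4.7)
The plan is structural induction on the probability expression $f$, strengthening the statement to also record that $f(x)$ is defined on a punctured neighborhood of $0$; this strengthening is what lets the inductive step for division go through cleanly.

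The constant base case is trivial: $c$ has infinitesimal value $c\epsilon^{0}$, so $n = 0$, $r = c$, and $\lim_{x\to 0} c/x^{0} = c$. The substantive base case is $f(x) = P(D,(s,rx))$ with $r,s \in \mathbb{R}$. By Definition~\ref{Pdef}, since $r\epsilon$ is infinitesimal (assuming $r \neq 0$; the $r=0$ case is immediate), the infinitesimal evaluation yields $\mathsf{pdf}(D,s)\cdot r\epsilon$, so the predicted value of the limit is $n=1$ with coefficient $r\cdot \mathsf{pdf}(D,s)$. In real arithmetic, $f(x) = \mathsf{cdf}(D, s + rx/2) - \mathsf{cdf}(D, s - rx/2)$, and
\[
\lim_{x \to 0} \frac{\mathsf{cdf}(D, s + \tfrac{1}{2}rx) - \mathsf{cdf}(D, s - \tfrac{1}{2}rx)}{x}
\]
is the symmetric difference quotient of $\mathsf{cdf}(D,\cdot)$ at $s$ scaled by $r$; since $\mathsf{cdf}(D,\cdot)$ is differentiable by hypothesis (Definition~\ref{def:probexpr}) with derivative $\mathsf{pdf}(D,\cdot)$, this limit equals $r\cdot \mathsf{pdf}(D,s)$ as required, and $f(x)$ is defined everywhere.

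For the inductive step, assume $f_i(x)/x^{n_i} \to r_i$ with $f_i$ defined on a punctured neighborhood of $0$. Addition splits into the three subcases of Definition~\ref{def:infarith}. When $n_1 = n_2$, additivity of limits gives $(r_1 + r_2)\epsilon^{n_1}$ symbolically and $r_1 + r_2$ as the limit. When $n_1 < n_2$, write
\[
\frac{f_1(x) + f_2(x)}{x^{n_1}} \;=\; \frac{f_1(x)}{x^{n_1}} \;+\; \frac{f_2(x)}{x^{n_2}} \cdot x^{\,n_2 - n_1},
\]
and observe that the second summand tends to $r_2 \cdot 0 = 0$ because the exponent $n_2 - n_1$ is strictly positive regardless of the signs of $n_1, n_2$; the case $n_1 > n_2$ is symmetric, and subtraction is identical up to signs. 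Multiplication factors as $(f_1 f_2)/x^{n_1+n_2} = (f_1/x^{n_1})(f_2/x^{n_2}) \to r_1 r_2$. For division, $(f_1/f_2)/x^{n_1-n_2} = (f_1/x^{n_1})/(f_2/x^{n_2}) \to r_1/r_2$; here the hypothesis that $f(\epsilon)$ is not \code{undefined} forces $r_2 \neq 0$, and $f_2(x)/x^{n_2} \to r_2 \neq 0$ then guarantees $f_2(x) \neq 0$ throughout some punctured neighborhood of $0$, so $f_1/f_2$ is defined there and the quotient-of-limits law applies.

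The main delicate point is threading the \emph{defined on a punctured neighborhood of $0$} invariant through the induction, because without it the quotient-of-limits step for division is not licensed; a secondary care-point is the sign-agnostic exponent arithmetic $x^{n_2 - n_1} \to 0$ in the unequal-order addition subcase, which has to be verified even when $n_1, n_2$ are negative. Once these are in place, the theorem assembles straightforwardly from the base cases and the rules of Definition~\ref{def:infarith}.
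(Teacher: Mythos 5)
Your proof is correct and follows essentially the same route as the paper's: structural induction over the expression, with the symmetric difference quotient of the cdf handling the $P(D,(s,rx))$ base case and the three order-comparison subcases of Definition~\ref{def:infarith} handling addition. Your explicit threading of the ``defined on a punctured neighborhood of $0$'' invariant, and your use of the exponent $n_1-n_2$ in the division case (where the paper's displayed equation has a typo reading $n_1+n_2$), are minor refinements of the same argument rather than a different approach.
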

Note that the theorem only tells us that $\lim_{x\to0}\frac{f(x)}{x^{n}}=r$
\emph{if} $f(\epsilon)$ evaluates to $r\epsilon^{n}$ with infinitesimal
arithmetic. If evaluating $f(\epsilon)$ results in division by zero,
then the theorem does not give any information. In fact, the converse
of the theorem does \emph{not} hold: it may be that $\lim_{x\to0}\frac{f(x)}{x^{n}}=r$
but evaluating $f(\epsilon)$ results in division by zero.
\begin{proof}
By induction on the structure of the expression.\\
We know that evaluation of $f(\epsilon)$ did not result in division
by zero, and $f(\epsilon)=r\epsilon^{n}$. We need to show that $\lim_{x\to0}\frac{f(x)}{x^{n}}=r$.
\begin{itemize}
\item If $f(x)$ is a constant $r$, then we have $f(\epsilon)=r\epsilon^{0}$,
and indeed $\lim_{x\to0}\frac{f(x)}{x^{0}}=\lim_{x\to0}f(x)=r$.
\item If $f(x)=P(D,(s,rx))$. Now $f(\epsilon)=\mathsf{pdf}(D,s)\cdot r\epsilon$,
and
\begin{align*}
\mathsf{pdf}(D,s)\cdot r & =r\frac{d}{dx}[\mathsf{cdf}(D,x)]_{x=s}\\
 & =r\lim_{x\to0}\frac{\mathsf{cdf}(D,s+x)-\mathsf{cdf}(D,s-x)}{2x}\\
 & =\lim_{x'\to0}\frac{\mathsf{cdf}(D,s+\frac{1}{2}rx')-\mathsf{cdf}(D,s-\frac{1}{2}rx')}{x'}\\
 & =\lim_{x'\to0}\frac{P(D,(s,rx'))}{x'}
\end{align*}
\item If $f(x)=g(x)+h(x)$. Since evaluation of $f(\epsilon)$ did not result
in division by zero, neither did evaluation of the subexpressions
$g(\epsilon)$ and $h(\epsilon)$, so $g(\epsilon)=r_{1}\epsilon^{n_{1}}$
and $h(\epsilon)=r_{2}\epsilon^{n_{2}}$ for some $r_{1},r_{2},n_{1},n_{2}$.
Therefore, by the induction hypothesis we have $\lim_{x\to0}\frac{g(x)}{x^{n_{1}}}=r_{1}$
and $\lim_{x\to0}\frac{h(x)}{x^{n_{2}}}=r_{2}$.
\item Case $n_{1}=n_{2}$: Now $f(\epsilon)=(r_{1}+r_{2})\epsilon^{n_{1}}$,
and we have
\begin{align*}
\lim_{x\to0}\frac{f(x)}{x^{n_{1}}}= & \lim_{x\to0}\frac{g(x)+h(x)}{x^{n_{1}}}=\lim_{x\to0}\frac{g(x)}{x^{n_{1}}}+\lim_{x\to0}\frac{h(x)}{x^{n_{1}}}=r_{1}+r_{2}
\end{align*}
\item Case $n_{1}<n_{2}$: Now $f(\epsilon)=r_{1}\epsilon^{n_{1}}$, and
since $\lim_{x\to0}\frac{h(e)}{x^{n_{2}}}=r_{2}$ we have
\begin{align*}
0 & =0\cdot r_{2}=(\lim_{x\to0}x^{n_{2}-n_{1}})\cdot(\lim_{x\to0}\frac{h(x)}{x^{n_{2}}})=\text{\ensuremath{\lim_{x\to0}\frac{x^{n_{2}-n_{1}}h(x)}{x^{n_{2}}}}}=\lim_{x\to0}\frac{h(x)}{x^{n_{1}}}
\end{align*}
Therefore
\begin{align*}
\lim_{x\to0}\frac{f(x)}{x^{n_{1}}}= & \lim_{x\to0}\frac{g(x)+h(x)}{x^{n_{1}}}=\lim_{x\to0}\frac{g(x)}{x^{n_{1}}}+\lim_{x\to0}\frac{h(x)}{x^{n_{1}}}=r_{1}
\end{align*}
\item Case $n_{1}>n_{2}$. Analogous to the previous case.
\item If $f(x)=g(x)-h(x)$. Analogous to the case for addition.
\item If $f(x)=g(x)\cdot h(x)$. Since evaluation of $f(\epsilon)$ did
not result in division by zero, neither did evaluation of the subexpressions
$g(\epsilon)$ and $h(\epsilon)$, so $g(\epsilon)=r_{1}\epsilon^{n_{1}}$
and $h(\epsilon)=r_{2}\epsilon^{n_{2}}$ for some $r_{1},r_{2},n_{1},n_{2}$.
Therefore, by the induction hypothesis we have $\lim_{x\to0}\frac{g(x)}{x^{n_{1}}}=r_{1}$
and $\lim_{x\to0}\frac{h(x)}{x^{n_{2}}}=r_{2}$. Then
\begin{align*}
\lim_{x\to0}\frac{f(x)}{x^{n_{1}+n_{2}}} & =\lim_{x\to0}\frac{g(x)}{x^{n_{1}}}\cdot\frac{h(x)}{x^{n_{2}}}=(\lim_{x\to0}\frac{g(x)}{x^{n_{1}}})\cdot(\lim_{x\to0}\frac{h(x)}{x^{n_{2}}})=r_{1}\cdot r_{2}
\end{align*}
\item If $f(x)=g(x)/h(x)$. Since evaluation of $f(\epsilon)$ did not result
in division by zero, neither did evaluation of the subexpressions
$g(\epsilon)$ and $h(\epsilon)$, so $g(\epsilon)=r_{1}\epsilon^{n_{1}}$
and $h(\epsilon)=r_{2}\epsilon^{n_{2}}$ for some $r_{1},r_{2},n_{1},n_{2}$.
Therefore, by the induction hypothesis we have $\lim_{x\to0}\frac{g(x)}{x^{n_{1}}}=r_{1}$
and $\lim_{x\to0}\frac{h(x)}{x^{n_{2}}}=r_{2}$. By the assumption
that no division by exactly zero occurred in the evaluation of $f(\epsilon)$,
we have $r_{2}\neq0$. Then
\begin{align*}
\lim_{x\to0}\frac{f(x)}{x^{n_{1}+n_{2}}} & =\lim_{x\to0}\frac{g(x)}{x^{n_{1}}}/\frac{h(x)}{x^{n_{2}}}=(\lim_{x\to0}\frac{g(x)}{x^{n_{1}}})/(\lim_{x\to0}\frac{h(x)}{x^{n_{2}}})=r_{1}/r_{2}
\end{align*}
\end{itemize}
This finishes the proof.
\end{proof}

\subsubsection*{Some subtleties of limits and infinitesimals}

In order to think about infinitesimals one must first choose a function
$f(x)$ of which one wishes to learn something about the limit as
$x\to0$. Thinking about infinitesimal arithmetic independent of such
a function leads to confusion. Furthermore, the result of evaluating
$f(\epsilon)$ depends not just on $f(x)$ as a function on real numbers,
but also on the arithmetic expression used for computing $f$. Consider
the functions $f,g$:
\begin{align*}
f(x) & =5\cdot x^{2}+0\cdot x\\
g(x) & =5\cdot x^{2}
\end{align*}
As functions on real numbers, $f=g$, but nevertheless, with infinitesimal
arithmetic their results differ:
\begin{align*}
f(\epsilon) & =0\cdot\epsilon^{1}\\
g(\epsilon) & =5\cdot\epsilon^{2}
\end{align*}
Applying the theorem to these results gives the following limits for
$f$ and $g$:
\begin{align*}
\lim_{x\to0}\frac{f(x)}{x} & =0\\
\lim_{x\to0}\frac{g(x)}{x^{2}} & =5
\end{align*}
Both of these limits are correct, but this example shows that \emph{which}
limit the theorem says something about may depend on how the function
is computed. The limit for $g$ gives more information than the limit
for $f$; the limit for $f$ is conservative and doesn't tell us as
much as the limit for $g$ does. Fortunately, this won't be a problem
for our use case: we intend to apply the theorem to the weighted average
of importance sampling, where the probabilities may be infinitesimal
numbers. In this case the power of $\epsilon$ of the numerator and
denominator are always the same, so the final result will always have
power $\epsilon^{0}$, and the theorem will then tell us about the
limit $\lim_{x\to0}\frac{f(x)}{x^{0}}=\lim_{x\to0}f(x)$.

Another subtlety is that the converse of the theorem does not hold.
It is possible that $\lim_{x\to0}\frac{f(x)}{x^{n}}=r$, but evaluation
of $f(\epsilon)$ with infinitesimal arithmetic results in division
by exactly zero. An example is $f(x)=\frac{x^{2}}{(x+x^{2})-x}$.
We have $\lim_{x\to0}f(x)=1$, but when evaluating $f(\epsilon)=\frac{\epsilon^{2}}{(\epsilon+\epsilon^{2})-\epsilon}$,
division by zero occurs, because we have the evaluation sequence:
\begin{align*}
\frac{\epsilon^{2}}{(\epsilon+\epsilon^{2})-\epsilon}\to\frac{\epsilon^{2}}{\epsilon-\epsilon}\to\frac{\epsilon^{2}}{0}\to\text{undefined}
\end{align*}
If we used full Laurent series $a_{k}\epsilon^{k}+a_{k+1}\epsilon^{k+1}+\dots$
as our representation for infinitesimal numbers, then we would potentially
be able to compute more limits, even some of those where exact cancellation
happens in a denominator. Keeping only the first term is sufficient
for our purposes, and more efficient, because our infinitesimal numbers
are pairs $(r,n)$ of a real (or floating point) number $r$ and an
integer $n$, whereas Laurent series are infinite sequences of real
numbers $(a_{k},a_{k+1},\dots)$.

The lemmas about computing limits have the form ``For all $a,b\in\mathbb{R}$,
if $\lim_{x\to0}f(x)=a$, and $\lim_{x\to0}g(x)=b$, and $b\neq0$,
then $\lim_{x\to0}\frac{f(x)}{g(x)}=\frac{\lim_{x\to0}f(x)}{\lim_{x\to0}g(x)}$''.
It is \emph{not} true in general that $\lim_{x\to0}\frac{f(x)}{g(x)}=\frac{\lim_{x\to0}f(x)}{\lim_{x\to0}g(x)}$.
It is possible that the limit on the left hand side exists, even when
the limits on the right hand side fail to exist, or when the right
hand side is $\frac{0}{0}$. Therefore, in order to apply these theorems
about limits, we must know that the right hand side is not undefined,
prior to applying such a lemma. In the proof above, the existence
of the limits follows from the induction hypothesis, and that the
denominator is nonzero follows from the assumption that division by
zero does not occur. This is why we must assume that no division by
exactly zero occurs in the evaluation of $f(\epsilon)$ with infinitesimal
arithmetic, and it is also why the converse of the theorem does not
hold.

\subsection{Intervals of Infinitesimal Width Make Paradoxes Disappear}

The proposed observe construct allows finite width intervals \code{observe(D,(a,w))}
where \code{w} is an expression that returns a number, as well as
infinitesimal width intervals, as in \code{observe(D,(a,w{*}eps))}
where \code{w} is some expression that returns a number and \code{eps}
is the symbolic infinitesimal $\epsilon$. It is possible to allow
higher powers of \code{eps} to occur directly in the source program,
and it is possible to allow \code{eps} to occur in other places than
in widths of intervals, but for conceptual simplicity we shall assume
it doesn't, and that \code{observe} is always of one of those two
forms. That is, we will assume that \code{eps} is only used in order
to translate exact conditioning \code{observe(D,x)} to \code{observe(D,(x,w{*}eps))}.

We translate the example from the introduction as follows:
\begin{verbatim}

  h = rand(Normal(170, 50))
  if rand(Bernoulli(0.5))
     observe(Normal(200, 10), (h,w*eps))
  end

\end{verbatim}
Where the pair \code{(h,w{*}eps)} represents an interval of width
\code{w{*}eps} centered around \code{h}, in order to condition on
the observation to be ``exactly $h$''.

Let us now investigate the meaning of this program according to the
rejection sampling interpretation of \code{observe}. Assuming the
coin flip results in $true$, we reject the trial if the sample from
$Normal(200,10)$ does not fall in the interval $[h-\frac{1}{2}w\epsilon,h+\frac{1}{2}w\epsilon]$.
If the coin flip results in $false$, we always accept the trial.
If we let $\epsilon\to0$ then the probability of rejecting the trial
goes to $1$ if the coin flips to $true$, so almost all successful
trials will be those where the coin flipped to $false$. Therefore
the expected value of $h$ converges to $170$ as $\epsilon\to0$,
and expected value of running this program should be $170$.\\

We translate the BMI example as follows:
\begin{verbatim}

  h = rand(Normal(170, 50))
  w = rand(Normal(70, 30))
  if rand(Bernoulli(0.5))
     observe(Normal(200, 10), (h, A*eps))
  else
     observe(Normal(90, 5), (w, B*eps))
  end
  bmi = w / h^2

\end{verbatim}
Where \code{A} and \code{B} are constants with units $cm$ and $kg$,
respectively. The units force us to introduce these constants: since
\code{(h, A{*}eps)} represents an interval centered at \code{h}
(in cm), the width \code{A{*}eps} must also be a quantity in $cm$.
If we change the units of \code{h} or \code{w}, we also need to
change the units of \code{A} or \code{B}. If we change the units
of \code{h} and \code{A} from centimeters to meters, the numerical
value of \code{h} and \code{A} will both get multiplied by $\frac{1}{100}$.
This additional factor for \code{A{*}eps}, which cannot be provided
in the original non-interval type of \code{observe(D,x)} statement,
is what will make this program behave consistently under change of
units.

Both branches of the if statement contain observes with intervals
of infinitesimal width, so with rejection sampling both branches will
be rejected with probability 1, regardless of the outcome of the coin
flip. We must therefore interpret the example with \code{eps} tending
to $0$, but not being exactly $0$. If we chose \code{A} to be 1
meter, and \code{B} to be 1 kg, and change \code{B} to be 1000 kg,
then the observe in the else branch is 1000x more likely to succeed
compared to before, because the width of the interval goes from \code{1{*}eps}
to \code{1000{*}eps}. If we made this change then most of the successful
trials would be trials where the coin flipped to \code{false}. Thus
even in the infinitesimal case, the \emph{relative} sizes of the intervals
matter a great deal. The relative sizes of the intervals are an essential
part of the probabilistic program, and omitting them will inevitably
lead to unit anomalies, because changing units also requires resizing
the intervals by a corresponding amount (by 1000$\times$ in case
we change $w$ from $kg$ to $g$). If we do not resize the intervals,
that changes the relative rejection rates of the branches, or the
relative weights of the trials, and thus the estimated expectation
value $\mathbb{E}[bmi]$. As Jaynes notes, conditioning on measure-zero
events is ambiguous; even though in the limit the intervals \code{(w,1{*}eps)}
and \code{(w,1000{*}eps)} both tend to the singleton set \code{\{w\}},
relative to the interval \code{(h,A{*}eps)} it matters \emph{which}
of these limits is intended, and the final result will and must depend
on which limit was specified.

We translate the third example as follows:
\begin{verbatim}

  x = rand(Normal(10,5))
  observe(Normal(15,5), (x,eps))
  return exp(x)

\end{verbatim}
After a parameter transformation from $x$ to $\exp(x)$ we get the
following program:
\begin{verbatim}

  exp_x = rand(LogNormal(10,5))
  observe(LogNormal(15,5), (exp_x,exp_x*eps))
  return exp_x

\end{verbatim}
Note that the width of the interval is now \code{exp\_x{*}eps} and
not simply \code{eps}. In general, if we apply a differentiable function
$f$ to an interval of width $\epsilon$ around $x$, we obtain an
interval of width $f'(x)\epsilon$ around $f(x)$. If we take the
exponential of an interval of small width $\epsilon$ around $x$,
we get an interval of width $\exp(x)\epsilon$ around $\exp(x)$,
\textbf{not }an interval of width $\epsilon$ around $\exp(x)$. Both
of these programs should give the same estimate for the expectation
value of $\exp(x)$, so that infinitesimal width intervals allow us
to correctly express non-linear parameter transformations without
running into Borel-Komolgorov-type paradoxes.

\subsection{On the Statistical Meaning of Conditioning With Intervals and ``Soft
Conditioning''}

It is debatable whether conditioning on small but finite width intervals
is preferable to conditioning on measure zero events. Real measurement
devices do not measure values to infinite precision. If a measurement
device displays 45.88, we might take that to mean an observation in
the interval $[45.875,45.885]$. The measurement may in addition measure
the true value \code{x} plus some \code{Normal(0,sigma)} distributed
noise rather than the true value \code{x}. In this case it might
be appropriate to use \code{observe(Normal(x,sigma), (45.88, 0.01))}.
The finite precision of the device and its noisy measurement are in
principle two independent causes of uncertainty. The rejection sampling
interpretation of this program is that we first sample a value from
\code{Normal(x,sigma)} and then continue with the current trial if
this lies in the interval $[45.875,45.885]$, which matches the two
sources of uncertainty. An argument for using infinitesimal width
intervals is that \code{observe} on a finite interval requires the
evaluation of the distribution's CDF, which is usually more complicated
and expensive to compute than the distribution's PDF.

The term ``soft conditioning'' is sometimes used for \code{observe(D,x)}
statements, particularly when the distribution \code{D} is the normal
distribution. This term can be interpreted as an alternative to the
rejection sampling interpretation in several ways:
\begin{enumerate}
\item Rather than conditioning on \code{x} being exactly \code{y}, we
instead condition on \code{x} being ``roughly'' \code{y}.
\item The statement \code{observe(D,x)} means that we continue with the
current trial with probability \code{pdf(D,x)} and reject it otherwise.
\end{enumerate}
We argue that neither of these interpretations is completely satisfactory.
For (1) it is unclear what the precise probabilistic meaning of conditioning
on $x$ being ``roughly'' $y$ is. One possible precise meaning
of that statement is that we reject the trial if the difference $|x-y|$
is too large, and continue otherwise, but this is not what a statement
such as \code{observe(Normal(y,0.01), x)} does. Rather, it weighs
trials where $x$ is close to $y$ higher, and smoothly decreases
the weight as the distance between $x$ and $y$ gets larger. It may
seem that (2) makes this idea precise, but unfortunately \code{pdf(D,x)}
is not a probability but a probability density, and can even have
units or be larger than $1$. Furthermore, the statement ``continue
with the current trial with probability \code{pdf(D,x)}'' seems
to have nothing to do with the distribution \code{D} as a probability
distribution, and instead seems to be a statement that suggests that
the statistical model is a biased coin flip rather than drawing a
sample from \code{D}. Indeed, under our rejection sampling interpretation,
if one wants to have a program whose statistical model is about coin
flips, one can use the program \code{observe(Bernoulli(f(x)), true)}.
That program \emph{does} mean ``flip a biased coin with heads probability
\code{f(x)} and continue with the current trial if the coin landed
heads''. This makes sense for any function \code{f(x)} provided
the function gives us a probability in the range $[0,1]$. If that
function has a roughly bump-like shape around \code{y}, then this
will indeed in some sense condition on \code{x} being roughly \code{y}.
The function $C\exp((x-A)^{2}/B)$ similar to the PDF of the normal
distribution does have a bump-like shape around $A$, so it is possible
to use that function for \code{f}, if one makes sure that $B$ and
$C$ are such that it is unitless and everywhere less than 1 (note
that this normalization is not the same as the normalization that
makes its integral sum to $1$).

We therefore suggest to stick with the rejection sampling interpretation
of observe statements, and suggest that a statistician who wants to
do ``soft conditioning'' in the senses (1) and (2) writes their
probabilistic program using \code{observe(Bernoulli(f(x)), true)}
where \code{f} is a function of the desired soft shape rather than
\code{observe(D,x)} where the PDF of \code{D} has that shape.

\subsection{Importance Sampling with Infinitesimal Probabilities\label{subsec:Importance-sampling-with}}

To do importance sampling for programs with infinitesimal width intervals
we need to change almost nothing. We execute a call \code{observe(D,I)}
as \code{weight {*}= P(D,I)} where \code{P(D,I)} has been defined
in (\ref{eq:P}). Since \code{P(D,I)} returns an infinitesimal number
if the width of \code{I} is infinitesimal, the computed \code{weight}
variable will now contain a symbolic infinitesimal number $r\epsilon^{n}$
(where $n$ is allowed to be $0$), rather than a real number. It
will accumulate the product of some number of ordinary probabilities
(for \code{observe} on discrete distributions or continuous distributions
with an interval of finite width) and a number of infinitesimal probabilities
(for \code{observe} on continuous distributions with intervals of
infinitesimal width).

We now simply evaluate the estimate for $\mathbb{E}[V]$ using the
usual weighted average formula, with infinitesimal arithmetic
\begin{align}
\mathbb{E}[V] & \approx\frac{\sum_{k=0}^{N}(weight_{k})\cdot(V_{k})}{\sum_{k=0}^{N}(weight_{k})}\label{eq:est}
\end{align}
In the denominator we are adding numbers of the form \emph{$weight_{k}=w_{k}\epsilon^{n_{k}}$.}
Only the numbers with the minimum value $n_{k}=n_{min}$ matter; the
others are infinitesimally small compared to those, and do not get
taken into account due to the definition of $(+)$ on infinitesimal
numbers. The same holds for the numerator: the values $V_{k}$ associated
with weights that are infinitesimally smaller do not get taken into
account (an optimized implementation could reject a trial as soon
as \code{weight} becomes infinitesimally smaller than the current
sum of accumulated weights, since those trials will never contribute
to the estimate of $\mathbb{E}[V]$). Therefore the form of the fraction
is
\begin{align*}
\mathbb{E}[V]\approx\frac{A\epsilon^{n_{min}}}{B\epsilon^{n_{min}}} & =\frac{A}{B}\epsilon^{n_{min}-n_{min}}=\frac{A}{B}\epsilon^{0}
\end{align*}
that is, the infinitesimal factors cancel out in the estimate for
$\mathbb{E}[V]$, and we obtain a non-infinitesimal result.

We shall now suppose that the symbolic infinitesimal \code{eps} only
occurs in the width of intervals in\code{observe(D,(x,r{*}eps))}
calls, and not, for instance, in the return value of the probabilistic
program. In this case, the estimate (\ref{eq:est}) of $\mathbb{E}[V]$
satisfies the conditions of Theorem \ref{thm:lim}. The calculated
estimate may be viewed as a probability expression $f(\epsilon)$
of $\epsilon$ (Definition \ref{def:probexpr}), and since $f(\epsilon)=\frac{A}{B}\epsilon^{0}$,
the theorem implies that $\lim_{x\to0}f(x)=\frac{A}{B}$. Therefore
the estimate calculated by importance sampling with infinitesimal
arithmetic indeed agrees with taking the limit $\epsilon\to0$. Figure
\ref{graphs} shows three example probabilistic programs that are
parameterized by the interval width. The blue lines show several runs
of the probabilistic program as a function of the interval width,
and the orange line shows the result when taking the width to be $\epsilon$.
Taking the width to be exactly 0 results in division by zero in the
weighted average, but taking it to be $\epsilon$ correctly computes
the limit: the blue lines converge to the orange lines as the width
goes to 0.

\begin{figure}
\label{graphs}

\begin{minipage}[c][1\totalheight][t]{0.42\textwidth}%
\begin{verbatim}
  function example1(width)
    h = rand(Normal(1.70, 0.2))
    w = rand(Normal(70, 30))
    if rand(Bernoulli(0.5))
      observe(Normal(2.0,0.1),
        Interval(h,10*width))
    else
      observe(Normal(90,5),
        Interval(w,width))
    end
    return w / h^2
  end
\end{verbatim}
\end{minipage}%
\begin{minipage}[c][1\totalheight][t]{0.58\textwidth}%
\includegraphics[width=1\textwidth]{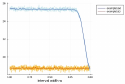}%
\end{minipage}

\begin{minipage}[c][1\totalheight][t]{0.42\textwidth}%
\begin{verbatim}
  function example2(width)
    h = rand(Normal(1.7,0.5))
    if rand(Bernoulli(0.5))
      observe(Normal(2.0,0.1),
        Interval(h,width))
    end
    return h
  end
\end{verbatim}
\end{minipage}%
\begin{minipage}[c][1\totalheight][t]{0.58\textwidth}%
\includegraphics[width=1\textwidth]{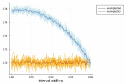}%
\end{minipage}

\begin{minipage}[c][1\totalheight][t]{0.42\textwidth}%
\begin{verbatim}
  function example3(width)
    x = rand(Normal(10,5))
    observe(Normal(15,5),
      Interval(x,width))
    return x
  end
\end{verbatim}
\end{minipage}%
\begin{minipage}[c][1\totalheight][t]{0.58\textwidth}%
\includegraphics[width=1\textwidth]{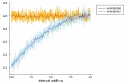}%
\end{minipage}

\caption{Three example programs evaluated with finite width intervals with
width going to zero (blue curves) and with infinitesimal width (orange
curves). The finite width result correctly converges to the infinitesimal
result in the limit $w\to0$.}

\vfill{}
\end{figure}

\subsection{The Correspondence Between Observe on Points and Observe on Intervals}

We may take a program written using \code{observe(D,x)} with exact
conditioning on points, and convert it to our language by replacing
such calls with \code{observe(D,(x,w{*}eps))} where \code{w} is
some constant to make the units correct. For programs that exhibit
a paradox of type 1 by executing a different number of observes depending
on the outcome of calls to \code{rand}, the computed expectation
values will change. However, for programs that always execute the
same number of \code{observe} calls, regardless of the outcome of
\code{rand} calls, the computed expectation values will not be affected
by this translation. To see this, note that a call to \code{observe(D,x)}
will multiply \code{weight {*}= pdf(D,x)}, whereas \code{observe(D,(x,w{*}eps))}
will multiply \code{weight {*}= pdf(D,x){*}w{*}eps}. Thus if the
observe calls are the same in all trials, the only difference is that
weight will contain an extra factor of $w\epsilon$ in all trials.
The net result is that both the numerator and denominator in the weighted
average get multiplied by the factor $w\epsilon$, which has no effect.
Thus this translation is conservative with respect to the old semantics,
in the sense that it does not change the result that already well-behaved
probabilistic programs compute.

\subsection{Parameter Transformations as a Language Feature}

The three paradoxes we identified all have to do with parameter transformations.
We explicitly add parameter transformations as a language feature.
A parameter transformation $T$ allows us to transform a probability
distribution $D$ to $T(D)$, such that sampling from $T(D)$ is the
same as sampling from $D$ and then applying the function $T$ to
the result. In order to ensure that the distribution $T(D)$ has a
probability density function we require $T$ to be continuously differentiable.
We can also use a parameter transformation to transform an interval
from $I$ to $T(I)=\{T(x)\,:\,x\in I\}$ which contains all the numbers
$T(x)$ for $x\in I$. In order to ensure that the transformed interval
is again an interval, we require that $T$ is monotone, that is, whenever
$a<b$ we also have $T(a)<T(b)$. In this case, $T$'s action on an
interval $[a,b]$ is simple: $T([a,b])=[T(a),T(a)]$.
\begin{definition}
A parameter transformation $T:\mathbb{R_{A}}\to\mathbb{R_{B}}$ is
a continuously differentiable function with $T'(x)>0$ for all $x\in\mathbb{R_{A}}$,
where $\mathbb{R_{A}\subseteq R}$ and $\mathbb{R_{B}\subseteq R}$
are intervals representing its domain and range.
\end{definition}
A strictly monotone function has an inverse on its range, so parameter
transformations have an inverse $T^{-1}$ and $T^{-1}(y)=T'(T^{-1}(y))^{-1}>0$,
so the inverse of a parameter transformation is again a parameter
transformation.
\begin{example}
The function $T_{1}(x)=\exp(x)$ is a parameter transformation $T_{1}:(-\infty,\infty)\to[0,\infty)$.
The function $T_{2}(x)=100x$ is a parameter transformation $T_{2}:(-\infty,\infty)\to(-\infty,\infty)$.
\end{example}
The transformation $T_{1}$ can be used to convert decibels to energy
density, and $T_{2}$ can be used to convert meters to centimeters.

Probability distributions need to support 3 operations: random sampling
with \code{rand(D)}, computing the CDF with \code{cdf(D,x)} and
computing the PDF with \code{pdf(D,x)}. We define these operations
for the transformed distribution $T(D)$.
\begin{definition}
Given a continuous probability distribution $D$ and a parameter transformation
$T$, we define the operations: \label{def:Tdef1}
\begin{align*}
\mathsf{rand}(T(D)) & =\text{\ensuremath{\mathsf{T(rand}(D))}}\\
\mathsf{cdf}(T(D),x) & =\mathsf{cdf}(D,T^{-1}(x))\\
\mathsf{pdf}(T(D),x) & =\mathsf{pdf}(D,T^{-1}(x))\cdot(T^{-1})'(x)
\end{align*}
\end{definition}
\noindent This definition matches how probability distributions transform
in probability theory. Our implementation represents a parameter transformation
$T$ as the 4-tuple of functions $(T,T',T^{-1},(T^{-1})')$, so that
we have access to the inverse and derivative.
\begin{definition}
Given an interval $(a,w)$ with midpoint $a\in\mathbb{R}$ and width
$w\in\mathbb{R}$ , we let $l=T(a-\frac{w}{2})$ and $r=T(a+\frac{w}{2})$
and define: \label{def:Tdef2}
\begin{align*}
T((a,w)) & =\left(\frac{l+r}{2},r-l\right)
\end{align*}
\end{definition}
\noindent This performs parameter transformation on an interval represented
as a midpoint-width pair. If the width is infinitesimal, we need a
different rule.
\begin{definition}
Given an interval $(a,w)$ with midpoint $a\in\mathbb{R}$ and infinitesimal
width $w$, we define \label{def:Tdef3}:
\begin{align*}
T((a,w)) & =(T(a),T'(a)\cdot w)
\end{align*}
\end{definition}
\noindent This performs parameter transformation on an infinitesimal
width interval, which gets transformed to an interval whose width
is larger by a factor $T'(a)$. The key lemma about parameter transformations
is that they do not affect the value of the (possibly infinitesimal)
probability of a (possibly infinitesimal) interval.
\begin{lemma}
Let $T$ be a parameter transformation, $D$ a distribution, and $I$
an interval. Then $P(T(D),T(I))=P(D,I)$ where $P$ is the probability
function defined at (\ref{eq:P}).
\end{lemma}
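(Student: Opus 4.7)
The plan is to prove the identity by case analysis on the width $w$ of the interval $I = (a, w)$, mirroring the two cases of Definition \ref{Pdef}. In each case I would unfold the transformation rules from Definitions \ref{def:Tdef1}, \ref{def:Tdef2}, and \ref{def:Tdef3} and show that the transformation factors cancel exactly.

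For the finite-width case, where $w = r\epsilon^0$ is an ordinary real number, Definition \ref{def:Tdef2} gives $T((a,w)) = ((l+r)/2, r-l)$ with $l = T(a-w/2)$ and $r = T(a+w/2)$. Plugging into the cdf-case of $P$, the midpoint plus or minus half the width reduces to $r$ and $l$ respectively, so $P(T(D), T(I)) = \mathsf{cdf}(T(D), r) - \mathsf{cdf}(T(D), l)$. Applying the identity $\mathsf{cdf}(T(D), x) = \mathsf{cdf}(D, T^{-1}(x))$ from Definition \ref{def:Tdef1}, and using that $T^{-1}(T(a\pm w/2)) = a \pm w/2$, this immediately simplifies to $\mathsf{cdf}(D, a+w/2) - \mathsf{cdf}(D, a-w/2) = P(D, I)$.

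For the infinitesimal-width case, where $w = r\epsilon^n$ with $n > 0$, Definition \ref{def:Tdef3} gives $T((a,w)) = (T(a), T'(a)\cdot w)$, whose width is again infinitesimal (since $T'(a) > 0$ is a real number and infinitesimal times real keeps the $\epsilon^n$ order by Definition \ref{def:infarith}). The infinitesimal clause of $P$ then yields $P(T(D), T(I)) = \mathsf{pdf}(T(D), T(a)) \cdot T'(a)\cdot w$. Unfolding $\mathsf{pdf}(T(D), T(a)) = \mathsf{pdf}(D, a) \cdot (T^{-1})'(T(a))$ via Definition \ref{def:Tdef1}, and invoking the inverse function theorem in the form $(T^{-1})'(T(a)) = 1/T'(a)$, the factor $T'(a)$ cancels and I obtain $\mathsf{pdf}(D, a)\cdot w = P(D, I)$.

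There is no real obstacle here; the work is purely unfolding definitions and using the inverse function theorem. The one thing to check carefully is that the transformed interval is well-formed in both cases: because $T$ is monotone with $T'>0$, we have $l < r$ in the finite case, so $T((a,w))$ really is a valid midpoint-width pair with positive width; and in the infinitesimal case $T'(a)\cdot r\epsilon^n$ is a legitimate nonzero infinitesimal of the same order $n$, so the second clause of Definition \ref{Pdef} applies. Both cases agreeing completes the proof.
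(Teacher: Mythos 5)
Your proposal is correct and follows essentially the same route as the paper's proof: a case split on whether the interval width is finite or infinitesimal, unfolding Definitions \ref{def:Tdef1}--\ref{def:Tdef3} and cancelling $(T^{-1})'(T(a))\cdot T'(a)=1$ in the infinitesimal case. The only addition is your explicit check that the transformed interval is well-formed, which the paper leaves implicit; otherwise the two arguments coincide.
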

\begin{proof}
We distinguish non-infinitesimal intervals from infinitesimal intervals.
\begin{itemize}
\item If $I=(a,w)$ is non infinitesimal, then by Definition (\ref{eq:P}):
\begin{align*}
P(D,(a,w)) & =\mathsf{cdf}(D,a+\frac{1}{2}w)-\mathsf{cdf}(D,a-\frac{1}{2}w)
\end{align*}
For $T((a,w))$ we have, where $l=T(a-\frac{w}{2})$ and $r=T(a+\frac{w}{2})$:
\begin{align*}
T((a,w)) & =(\frac{l+r}{2},r-l)
\end{align*}
and by (\ref{eq:P}):
\begin{align*}
P(T(D),T((a,w))) & =\mathsf{cdf}(T(D),\frac{l+r}{2}+\frac{1}{2}(r-l))-\mathsf{cdf}(T(D),\frac{l+r}{2}-\frac{1}{2}(r-l))\\
 & =\mathsf{cdf}(T(D),r)-\mathsf{cdf}(T(D),l)\\
 & =\mathsf{cdf}(D,T^{-1}(r))-\mathsf{cdf}(D,T^{-1}(l))\\
 & =\mathsf{cdf}(D,T^{-1}(T(a+\frac{w}{2})))-\mathsf{cdf}(D,T^{-1}(T(a-\frac{w}{2})))\\
 & \mathsf{=cdf}(D,a+\frac{w}{2})-\mathsf{cdf}(D,a-\frac{w}{2})
\end{align*}
\item If $I=(a,r\epsilon^{n})$ is infinitesimal ($n>0$), then by definition
(\ref{eq:P}):
\begin{align*}
P(D,(a,w)) & =\mathsf{pdf}(D,x)\cdot r\epsilon^{n}
\end{align*}
For $T((a,r\epsilon^{n}))$ we have:
\begin{align*}
T((a,r\epsilon^{n})) & =(T(a),T'(a)\cdot r\epsilon^{n})
\end{align*}
and by (\ref{eq:P}):
\begin{align*}
P(T(D),T((a,r\epsilon^{n}))) & =\mathsf{pdf}(T(D),T(a))\cdot T'(a)\cdot r\epsilon^{n}\\
 & =\mathsf{pdf}(D,T^{-1}(T(a)))\cdot(T^{-1})'(T(a))\cdot T'(a)\cdot r\epsilon^{n}\\
 & =\mathsf{pdf}(D,a)\cdot r\epsilon^{n}
\end{align*}
\end{itemize}
\end{proof}
This lemma implies that the effect of \code{observe(T(D),T(I))} is
the same as \code{observe(D,I)}, since \code{observe(D,I)} does
\code{weight {*}= P(D,I)}. This property of \code{observe} ensures
the absence of parameter transformation paradoxes, not only of the
three examples we gave, but in general: it does not matter which parameter
scale we use; the weight accumulated remains the same.

\section{Implementation in Julia\label{sec:IMPLEMENTATION-IN-JULIA}}

We have implemented the constructs described in the preceding sections
as a simple embedded DSL in the Julia programming language, with the
following interface:
\begin{itemize}
\item Infinitesimal numbers $r\epsilon^{n}$ constructed by \code{Infinitesimal(r,n)},
with predefined \code{eps = Infinitesimal(1.0,1)}, and overloaded
infinitesimal arithmetic operations \code{+,-,{*},/} according to
Definition \ref{def:infarith}.
\item Probability distributions \code{D} with random sampling \code{rand(D)}
and \code{cdf(D,x)} and \code{pdf(D,x)}. These distributions are
provided by Julia's Distributions package, which supports beta, normal,
Cauchy, Chi-square, Bernoulli, Binomial, and many other continuous
distributions and discrete distributions.
\item Intervals constructed by \code{Interval(mid,width)}, where \code{width}
may be infinitesimal, and an operation \code{P(D,I)} to compute the
(possibly infinitesimal) probability that a sample from $D$ lies
in the interval $I$. If $I$ is infinitesimal, then this uses the
PDF, and if $I$ has finite width, then this uses the CDF, according
to Definition \ref{Pdef}.
\item Parameter transformations \code{T} represented as 4-tuples $(T,T',T^{-1},(T^{-1})')$,
with operations \code{T(D)} and \code{T(I)} to transform probability
distributions and intervals, according to Definitions \ref{def:Tdef1},
\ref{def:Tdef2}, and \ref{def:Tdef3}.
\item The main operations of probabilistic programming DSL are the following:
\begin{itemize}
\item \code{rand(D)}, where \code{D} is a distribution provided by Julia's
Distributions package.
\item \code{observe(D,I)}, where \code{D} is a continuous distribution
and \code{I} is an interval, or \code{D} is a discrete distribution
and \code{I} is an element, implemented as \code{weight {*}= P(D,I)}
\item \code{importance(trials,program)} which does importance sampling,
where \code{trials} is the number of trials to run, and \code{program}
is a probabilistic program written as a Julia function that uses \code{rand}
and \code{observe}, and returns the value that we wish to estimate
the expectation value of. Importance sampling is implemented as described
in Section \ref{subsec:Importance-sampling-with}.
\end{itemize}
\end{itemize}
The example in the introduction can be written as follows:
\begin{verbatim}
  function example1_m()
     h = rand(Normal(1.7,0.5))
     if rand(Bernoulli(0.5))
        observe(Normal(2.0,0.1), Interval(h,eps))
     end
     return h
  end
  estimate = importance(1000000,example1_m)
\end{verbatim}
This program will produce an estimate very close to $1.7$. If we
change the units to centimeters, we will get an estimate very close
to $170$, as expected:
\begin{verbatim}
  function example1_cm()
     h = rand(Normal(170,50))
     if rand(Bernoulli(0.5))
        observe(Normal(200,10), Interval(h,100*eps))
     end
     return h
  end
  estimate = importance(1000000,example1_cm)
\end{verbatim}
The artifact contains the other examples from the paper and further
examples to illustrate the use of the DSL \citep{measurezeroartifact}.

\section{Conclusion \& Future Work}

We have seen that naive likelihood accumulation results in unit anomalies
when observe statements with continuous distributions are executed
conditionally on random data, and we have shown that the culprit is
the use of probability densities. From an analysis of what observe
statements mean in the discrete case, we motivated a switch to interval-based
observe statements, which have a probabilistic and rejection sampling
interpretation. To recover the behavior of measure-zero observe statements
we introduced intervals with infinitesimal width. This results in
the accumulation of infinitesimal probabilities rather than probability
densities, which solves the unit anomalies and paradoxes even when
conditioning on events of measure zero. Infinitesimal probabilities
also enabled us to implement parameter transformations that do not
change the behavior of the program. We implemented this form of probabilistic
programming as an embedded DSL in Julia.

This improves the state of the art in two ways:
\begin{enumerate}
\item It fixes unit and parameter transformation paradoxes, which result
in surprising and in some cases arguably incorrect behavior in existing
probabilistic programming languages when continuous observe statements
are executed conditionally on random data, or when nonlinear parameter
transformations are performed.
\item It gives the observe statement a probabilistic and\emph{ }rejection
sampling interpretation, with measure zero conditioning as a limiting
case when the observation interval is of infinitesimal width.
\end{enumerate}
We hope that this will have a positive impact on the development of
the formal semantic foundations of probabilistic programming languages,
potentially reducing the problem of conditioning to events of positive
measure. On the implementation side, we hope to generalize more powerful
inference algorithms such as Metropolis-Hastings and SMC to work with
infinitesimal probabilities.
\begin{acks}
I thank Sriram Sankaranarayanan and the anonymous reviewers for their
outstanding feedback. I am grateful to Arjen Rouvoet, Paolo Giarrusso,
Ike Mulder, Dongho Lee, Ahmad Salim Al-Sibahi, Sam Staton, Christian
Weilbach, Alex Lew, and Robbert Krebbers for help, inspiration, and
discussions.
\end{acks}

\bibliography{references}

\end{document}